\newtheorem{theoreme}{Theorem}[section]
\newtheorem{proposition}[theoreme]{Proposition}
\theoremstyle{definition}
\newtheorem{definition}[theoreme]{Definition}
\theoremstyle{exemple}
\newcommand{\C}{\mathbb{C}{\rm ov}}
\newcommand{\Var}{\mathbb{V}{\rm ar}}
\newcommand{\E}{\mathbb{E}}
\newcommand{\p}{\partial}
\newcommand{\X}{\mathcal{X}}
\begin{document}
\doublespacing

\title[Misspecified models]{ Analysis, detection and correction of misspecified discrete time state space models}

\author{Salima El Kolei}
\address{ENSAI, UBL, Campus de Ker Lann,
               rue Blaise Pascal,
 BP 37203,
35172 Bruz cedex - France}
\author{
    Fr\'ed\'eric Patras}
    \address{Universit\'e C\^ote d'Azur\\
CNRS, UMR 7351\\
Parc Valrose\\
06108 NICE Cedex 2 - France}
    
\date{\today}
\maketitle

\begin{abstract}

Misspecifications (i.e. errors on the parameters) of state space models lead to incorrect inference of the hidden states. This paper studies weakly nonlinear state space models with additive Gaussian noises and
proposes a method for detecting and correcting misspecifications. 
The latter induce a biased estimator of the hidden state but also happen to induce correlation on innovations and other residues. This property is used to find a well-defined objective function for which an optimisation routine is applied to recover the true parameters of the model.
 It is argued that this method can consistently estimate the bias on the parameter. We demonstrate the algorithm on various models of increasing complexity.

\keywords{Keywords: Kalman filter,  Extended Kalman filter, State space models, Misspecified models, Robust estimation}
 
\end{abstract}

\section{Introduction}
\label{intro}

This paper is concerned with the following family of discrete time state space models with additive Gaussian noises:

\begin{equation}\label{kf_erreur}
\left\lbrace\begin{array}{ll}
x_t=b(\theta_0, x_{t-1})+\beta_{\theta_0}\eta_t ,\\
y_{t}=h(\theta_0, x_{t})+ \sigma_{\theta_0}\varepsilon_t.
\end{array}
\right.
\end{equation} 
The variables $\eta_t\sim \mathcal{N}(0, I_{n\times 1}), \varepsilon_t\sim \mathcal{N}(0, I_{m\times 1})$ are assumed to be independent standard normal variables, $t\in{\mathbf N}^\ast$, $\beta_{\theta_0}$ (resp. $\sigma_{\theta_0}$) are $n\times n$ (resp. $m\times m$, with $\sigma_0\sigma_0^\ast$ positive definite) matrices, and  $\theta_0$ stands for the vector of parameters of the model. 
The functions $b,h,\beta ,\alpha$ are assumed to be differentiable.
The hidden states (or unobserved signal process)
 $\left\{x_t, t\in \mathbb{N}\right\}$ take value in $\X :={\mathbf R}^n$ and the observations $\left\{y_t, t\in \mathbf{N}^\ast\right\}$  in $\mathcal{Y}:={\mathbf R}^m$. We also denote the noise covariance matrices $R_{\theta_0}:=\sigma_{\theta_0}\sigma_{\theta_0}^\ast$, $Q_{\theta_0}:=\beta_{\theta_0}\beta_{\theta_0}^\ast$ where $^\ast$ stands for the transpose.

The aim of filtering is to make inference about the hidden state $x_t$  conditionally to the observations $y_1,\cdots, y_t$ denoted $y_{1:t}$ thereafter. In order to do so, there are various ways to estimate the parameters $\theta_0$ that, in most situations of interest are unknown and have to be approximated. They may for example be estimated using standard techniques (MLEs...), or be incorporated to the set of random quantities to be estimated. To quote only one example in the recent literature, Particle Gibbs samplers have proven to be a good way to simulate the joint distribution of hidden processes and model parameters in hidden Markov chain models, see e.g. \cite{Andrieu2010particle,chopin,DKP} 

Here, we face a different problem: we consider the situation where $\theta_0$ has been uncorrectly estimated, for example using a given biased estimator $\hat\theta$ such that  $\E[\hat{\theta}]=\theta=\theta_0+\epsilon$ (the way the estimator has been deviced is of no matter for our purposes). 
Our interest for these questions originated in the study of random volatility models such as Heston's, where some parameters are difficult to estimate. We wanted to understand how errors on the model parameters could impact the volatility estimates. The detection of errors method that is the purpose of the present article first arose from statistical phenomena detected in numerical simulations. We realized soon that the phenomena were universal, and related to theoretical properties of mispecified models.
Application domains include for example engineering and control where the parameters $\theta$ may be known at inception but may change to a new value $\theta_0$, for example due to a mechanical problem, so that $\theta$ becomes a wrong value for the true model parameters. Detecting the change from $\theta$ to $\theta_0$ may then be useful not only to improve the inference process, but also to detect the underlying problem.

It is well-known that using such incorrect filter models deteriorates the filter performance and may even cause the filter to diverge. Various results have been obtained in the literature on the impact of $\epsilon$ on the estimator of the hidden state; error covariance matrices have been studied and compared with the covariance matrices of the conditional distribution of $x_t$ and $x_{t+1}$ knowing $y_{1:t}$. These results are described in \cite{Jazwinski:105779}, where the reader can also find a survey of the classical literature on the subject.

The aim of the present article is different: we want to take advantage of the theoretical properties of misspecified state space models, not only to understand the impact of $\epsilon$ on the estimation of the hidden states but also, ultimately, to use its statistical properties in order to get a correct set of parameters for the state space model.

The key result underlying our analysis is that misspecifications 
do certainly induce a biased estimator of the hidden state but also, and most importantly for our purposes, they happen to induce correlation on the innovations and other residues associated to observations. This property is used to find a well-defined objective function for which an optimisation routine is applied to recover the true parameters of the model.
 It is argued that this method can consistently estimate the bias on the parameter. The method is easy to implement and runs fast. We demonstrate the algorithm on various models of increasing complexity.

Discrete time state space models are notoriously ubiquitous; their use is discussed in most textbooks on filtering from the early \cite{kalman_new_1960,Jazwinski:105779,Sage,anderson1979optimal} to the recent literature -we refer e.g. to \cite{durbin2012time} for a survey.
Application domains of our results include, besides finance, control and engineering: ecology, economy, epidemiology, meteorology and neuroscience.

The paper is organized as follows. Section \ref{Kalman} presents the model assumptions and introduces various estimators and processes, including the ``interpolation process'' (Eq. (\ref{interpolation})) that plays a central role in the article.
Section \ref{main result} states the theoretical results.
In section \ref{EstimParam}, we describe the method and in the following one demonstrate the algorithms on three examples: the first application is largely pedagogical and studies an elementary autoregressive linear model for which our approach can be easily understood. We move then to a nonlinear (square root) model, and, to conclude, apply our approach to a complex and nonlinear model, that is the Heston model, widely used in finance for option pricing and portfolios hedging. The behaviour of this last model when it comes to parameter estimation is notoriously difficult; our method behaves nevertheless quite satisfactorily. We compare finally our method and estimator (based on the interpolation process) with the estimator using the same strategy but based instead on innovations. 
Some concluding remarks are provided in the last section. The technical proofs are gathered in Appendix \ref{preuves1} and \ref{preuves2}. 

The theoretical results on misspecified models underlying the constructions in this article were mostly obtained in the first Author's PhD thesis \cite{Salima2}.

Notation: for any continuously differentiable function $g$, $[\p g/\p \theta]$ denotes the vector of the partial derivatives of $g$ w.r.t $\theta$.

\section{The misspecified (Extended) Kalman Filter}
\label{Kalman}

In the linear case, the model (\ref{kf_erreur}) reads ($t\in\mathbf{N}^\ast$):

\begin{equation}\label{linearisemodel}
\left\lbrace\begin{array}{ll}
x_{t}=u_{t}(\theta_0)+A_{\theta_0}x_{t-1}+\beta_{\theta_0}\eta_{t}\\
y_{t}=d_{t}(\theta_0)+C_{\theta_0}x_{t}+\sigma_{\theta_0}\varepsilon_{t}
\end{array}
\right.
\end{equation}

$\newline$
If the vector of parameters $\theta_0$ is perfectly known, the optimal filtering $p_{\theta_0}(x_t|y_{1:t})$ is Gaussian and the Kalman filter gives exactly the two first conditional moments: $\hat{x}_{t}=\E[x_t \vert y_{1:t}]$ and $P_t=\E[(x_t-\hat{x}_{t})(x_t-\hat{x}_{t})^\ast \vert y_{1:t}]$. In particular, the Kalman filter estimator is the \emph{BLUE} (Best Linear and Unbiased Estimator) among linear estimators. 

In most real applications, the linearity assumption of the functions $h$ and $b$ is not satisfied. A linearization by a first order Taylor series expansion can be performed and the Extended Kalman filter (EKF) consists in applying the Kalman filter on this linearized model.
Concretely, for the EKF, the matrix $C_{\theta_0}$ is the derivative of the function $h$ with respect to (w.r.t.) $x$ computed at the point $(\theta_0, \hat{x}_t^{-})$ where $\hat{x}_t^{-}:=\E[x_t \vert y_{1:t-1}]$. The matrix $A_{\theta_0}$ is the derivative of the function $b$ w.r.t. $x$ computed at the point $(\theta_0, \hat{x}_{t-1})$ and the functions $u_t(\theta_0)$ and  $d_t(\theta_0)$ are defined as:

\begin{equation*}
\left\lbrace\begin{array}{ll}
u_{t}(\theta_0)=b(\theta_0, \hat{x}_{t-1})-A_{\theta_0}\hat{x}_{t-1}\\
d_{t}(\theta_0)=h(\theta_0, \hat{x}^{-}_{t})-C_{\theta_0}\hat{x}^{-}_{t}
\end{array}\right.
\end{equation*}

In this paper, we assume that the vector of parameters $\theta_0$ is not perfectly known, so that the inference of the hidden state $x_t$ conditionally to $y_{1:t}$ is made with a parameter $\theta=\theta_0+\epsilon$, where $\epsilon$ stands for the error of specification. This case is frequent in practice since in general the vector of parameters is unknown and need to be estimated by an ordinary method. The resulting estimator can be biased and this bias is propagated on the estimation of the hidden state by the filter.

We now run the Kalman filter (resp. the EKF in the non linear case) with the misspecified model. The filter design reads therefore (take care that we still use the notations $\hat{x}_t$, $P_t$... for the estimator of $x_t$, its variance...  but from now on the notation will refer to the estimators build using the biaised parameter $\theta$)
\begin{equation}
\left\lbrace\begin{array}{lllll}
\hat{x}_t^-=u_t(\theta)+A_\theta \hat{x}_{t-1}\\
\hat{x}_t=[I-K_tC_\theta]\hat{x}_t^-+K_t(y_t-d_t(\theta))\\
P_t^-=A_\theta P_{t-1}A_\theta^\ast+Q_\theta\\
P_t=[I-K_tC_\theta]P_t^-\\
K_t=P_t^-C_\theta^\ast[C_\theta P_t^-C_\theta^\ast+R_\theta]^{-1}
\end{array}\right.
\end{equation}
with initial conditions $\hat{x}_0^-:=\E[x_0]$, $P_0^-:=Var(x_0)$ (recall that $R_\theta=\sigma_\theta\sigma_\theta^\ast$, $Q_\theta=\beta_\theta\beta_\theta^\ast$).

We also introduce the residues, called respectively the \it filter error\rm , the \it innovation \rm and the \it interpolation \rm processes.
\begin{equation}\label{interpolation}
\left\lbrace\begin{array}{lll}
e_t:=x_t-\hat{x}_t\\
\zeta_t^-:=y_t-\hat{y}_t^-\\
\zeta_t:=y_t-\hat{y}_t,
\end{array}\right.
\end{equation}
where 
\begin{equation}
\left\lbrace\begin{array}{ll}
\hat{y}_t^-:=d_t(\theta)+C_\theta\hat{x}_t^-\\
\hat{y}_t:=d_t(\theta)+C_\theta \hat{x}_t
\end{array}\right.
\end{equation}

Notice in particular the introduction of the interpolation process, that we specifically designed for parameters error-tracking purposes.

\section{Main result}\label{main result}
The empirical and theoretical properties of the interpolation process $(\zeta_t)_{t\geq 1}$ (precisely, its auto-covariance) are the object of the present section. They will lead to propose a method to detect a misspecified model.
Although the detection is useful in practice, we will give also a new method to approximate the bias $\epsilon$ and so to estimate the true parameter $\theta_0$.

Let us consider first the linear case. Recall that, by assumption, the functions $C,A,u_t,d_t,\beta,\sigma$ are differentiable.
Recall also that if the vector of parameters is exactly known, the error \emph{a posteriori} $e_t$ at time $t$ is given by the following formula:

\begin{equation}
e_t =(I_{n\times n}-K_t C_{\theta_0})A_{\theta_0}e_{t-1}-K_t( \sigma_{\theta_0}\varepsilon_t+ C_{\theta_0} \beta_{\theta_0}\eta_t)+ \beta_{\theta_0}\eta_t\nonumber\\
\end{equation}
where $K_t$ is the Kalman matrix which minimizes the variance matrix $P_t=\E_{\theta_0}[(x_t-\hat{x}_{t})(x_t-\hat{x}_{t})^\ast \vert y_{1:t}]$ of the hidden state ( see \cite{kalman_new_1960}). 

The following Theorem gives the propagation of the error a posteriori $e_t$ and of $\zeta_t$ for the Kalman Filter 
when $\theta_0$ is not exactly known. 

\begin{theoreme} \label{prop_ekf} Consider the model (\ref{linearisemodel}). If $\epsilon <<1$, then:

\begin{eqnarray} 
e_t &=&(I_{n\times n}-K_t C_{\theta})A_{\theta}e_{t-1}-K_t( \sigma_{\theta}\varepsilon_t+ C_{\theta} \beta_{\theta}\eta_t)+ \beta_{\theta}\eta_t\nonumber\\
&&+
\mathcal{E}_{x}^{\epsilon}(\theta,t)+\mathcal{F}_{x}^{\epsilon}(\theta,t)x_{t-1}+\mathcal{W}_{x}^{\epsilon}(\theta,t)+o(\epsilon)\label{f1}
\end{eqnarray} 
with:

\begin{eqnarray} 
&&\mathcal{E}_{x}^{\epsilon}(\theta,t)=-\epsilon \left( (I_{n\times n}-K_t C_{\theta})\frac{\p u_t}{\p \theta}(\theta)-K_t\frac{\p d_t}{\p \theta}(\theta)-K_t\frac{\p C_{\theta}}{\p \theta}u_t(\theta)\right)\label{matrixA1}\\
&&\mathcal{F}_{x}^{\epsilon}(\theta,t)=-\epsilon \left( (I_{n\times n}-K_t C_{\theta})\frac{\p A_{\theta}}{\p \theta}-K_t\frac{\p C_{\theta}}{\p \theta}A_{\theta}\right)\label{matrixA2}\\
&&\mathcal{W}_{x}^{\epsilon}(\theta,t)=-\epsilon \left( \frac{\p  \beta_{\theta}}{\p \theta}\eta_t-K_t C_{\theta} \frac{\p  \beta_{\theta}}{\p \theta}\eta_t-K_t  \beta_{\theta} \frac{\p C_{\theta}}{\p \theta}\eta_t-K_t \frac{\p  \sigma_{\theta}}{\p \theta}\varepsilon_t\right)\label{matrixA3}
\end{eqnarray}
Additionally, the interpolation process $\zeta_t$ is equal to:

\begin{equation}\label{f2}
\zeta_t=C_{\theta} e_{t}+  \sigma_{\theta}\varepsilon_t+
\mathcal{E}_{y}^{\epsilon}(\theta,t)+\mathcal{F}_{y}^{\epsilon}(\theta,t)x_{t}+\mathcal{W}_{y}^{\epsilon}(\theta,t)+o(\epsilon)
\end{equation}
with:

\begin{eqnarray} \label{matrixC}
\mathcal{E}_{y}^{\epsilon}(\theta,t)=-\epsilon\frac{\p d_t}{\p \theta}(\theta),\quad \mathcal{F}_{y}^{\epsilon}(\theta,t)=-\epsilon \frac{\p C_{\theta}}{\p \theta},\quad \mathcal{W}_{y}^{\epsilon}(\theta,t)=-\epsilon\frac{\p  \sigma_{\theta}}{\p \theta}\varepsilon_t
\end{eqnarray}

\end{theoreme}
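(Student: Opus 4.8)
The statement is an entirely perturbative fact: it suffices to expand the misspecified filter recursions to first order in $\epsilon$, and the work is bookkeeping. Two preliminary remarks organise the computation. First, in the linear model the Kalman gain $K_t$, and likewise $P_t,P_t^-$, depend only on the misspecified parameter $\theta$ (through $A_\theta,C_\theta,Q_\theta,R_\theta$ and the initialisation); they carry no randomness and require no expansion, and every matrix $A,C,K$ occurring in the statement is the misspecified one. Second, when $\theta$ is a vector the symbol $\epsilon\,[\p g/\p\theta]$ is read componentwise in $\theta$ (equivalently, one fixes the direction of the perturbation $\epsilon$); this is immaterial for what follows.

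First I would isolate two exact identities. Feeding the update step $\hat x_t=(I-K_tC_\theta)\hat x_t^-+K_t(y_t-d_t(\theta))$ into $e_t:=x_t-\hat x_t$, and writing $\delta_t:=x_t-\hat x_t^-$ for the a priori error, gives
\begin{equation*}
e_t=\delta_t-K_t\zeta_t^-,\qquad \zeta_t^-=C_\theta\delta_t+\sigma_\theta\varepsilon_t+\bigl(d_t(\theta_0)-d_t(\theta)\bigr)+\bigl(C_{\theta_0}-C_\theta\bigr)x_t+\bigl(\sigma_{\theta_0}-\sigma_\theta\bigr)\varepsilon_t,
\end{equation*}
where the second identity uses the true observation equation $y_t=d_t(\theta_0)+C_{\theta_0}x_t+\sigma_{\theta_0}\varepsilon_t$ together with $C_{\theta_0}x_t-C_\theta\hat x_t^-=(C_{\theta_0}-C_\theta)x_t+C_\theta\delta_t$. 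The very same manipulation with $\hat y_t=d_t(\theta)+C_\theta\hat x_t$ and $\hat x_t=x_t-e_t$ yields $\zeta_t=C_\theta e_t+\sigma_\theta\varepsilon_t+(d_t(\theta_0)-d_t(\theta))+(C_{\theta_0}-C_\theta)x_t+(\sigma_{\theta_0}-\sigma_\theta)\varepsilon_t$, and a first-order Taylor expansion of the three differences around $\theta$, with $\theta_0=\theta-\epsilon$, turns them into $\mathcal E_y^\epsilon(\theta,t)$, $\mathcal F_y^\epsilon(\theta,t)x_t$ and $\mathcal W_y^\epsilon(\theta,t)$; this proves \eqref{f2} and \eqref{matrixC}.

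Next I would treat the state recursion \eqref{f1}. Using the true state equation $x_t=u_t(\theta_0)+A_{\theta_0}x_{t-1}+\beta_{\theta_0}\eta_t$, the prediction $\hat x_t^-=u_t(\theta)+A_\theta\hat x_{t-1}$ and the split $\hat x_{t-1}=x_{t-1}-e_{t-1}$, one obtains the exact identity
\begin{equation*}
\delta_t=A_\theta e_{t-1}+\beta_{\theta_0}\eta_t+\bigl(u_t(\theta_0)-u_t(\theta)\bigr)+\bigl(A_{\theta_0}-A_\theta\bigr)x_{t-1}.
\end{equation*}
Expanding $u_t(\theta_0)-u_t(\theta)$, $A_{\theta_0}-A_\theta$, $\beta_{\theta_0}-\beta_\theta$ (and, in $\zeta_t^-$, also $d_t(\theta_0)-d_t(\theta)$, $C_{\theta_0}-C_\theta$, $\sigma_{\theta_0}-\sigma_\theta$) to first order around $\theta$ displays $\delta_t$ as its $O(1)$ skeleton $A_\theta e_{t-1}+\beta_\theta\eta_t$ plus an $O(\epsilon)$ forcing; inside any term already carrying a factor $\epsilon$ one may replace the accompanying random variable by its leading order, in particular $x_t=A_\theta x_{t-1}+u_t(\theta)+\beta_\theta\eta_t+O(\epsilon)$, the induced error being $o(\epsilon)$. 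Substituting all of this into $e_t=\delta_t-K_t\zeta_t^-$ and sorting the outcome according to the object it multiplies — quantities constant in $(x,\eta,\varepsilon)$, those linear in $x_{t-1}$, and those linear in $(\eta_t,\varepsilon_t)$ — reproduces the $O(1)$ line $(I-K_tC_\theta)A_\theta e_{t-1}-K_t(\sigma_\theta\varepsilon_t+C_\theta\beta_\theta\eta_t)+\beta_\theta\eta_t$ together with the matrices $\mathcal E_x^\epsilon$, $\mathcal F_x^\epsilon$, $\mathcal W_x^\epsilon$ of \eqref{matrixA1}--\eqref{matrixA3}, up to $o(\epsilon)$.

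The one delicate point — which I would flag as the main obstacle — is the bookkeeping around $e_{t-1}$. At $\epsilon=0$ it is the ordinary a posteriori filtering error, an $O(1)$ random variable, so it must not be discarded; yet it must also not be "corrected" by $\epsilon$-terms, because the misspecified filter applies exactly the matrix $A_\theta$ to $\hat x_{t-1}$, so that the split $A_{\theta_0}x_{t-1}-A_\theta\hat x_{t-1}=(A_{\theta_0}-A_\theta)x_{t-1}+A_\theta e_{t-1}$ already separates the homogeneous term $A_\theta e_{t-1}$ exactly and pushes the matrix mismatch into the $x_{t-1}$-forcing $\mathcal F_x^\epsilon$ — which is precisely why $x_{t-1}$ (not $\hat x_{t-1}$) appears in \eqref{f1}. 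Keeping this split — and more generally the decision of which factors to Taylor-expand and which to leave exact — consistent throughout is the whole substance of the argument; the remainder is routine linear algebra, and the vector-parameter version follows componentwise.
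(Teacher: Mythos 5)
Your proposal is correct and follows essentially the same route as the paper: a first-order Taylor expansion in $\theta$ around the misspecified value of the exact filter-error and residual identities, with the splits $A_{\theta_0}x_{t-1}-A_\theta\hat x_{t-1}=(A_{\theta_0}-A_\theta)x_{t-1}+A_\theta e_{t-1}$ (and its analogue for $C$) isolating the homogeneous terms, and with $x_t$ re-expressed through $x_{t-1}$ at leading order inside the $\epsilon$-terms exactly as the paper does. Your organisation via the a priori error $\delta_t$ and the exact identity $e_t=\delta_t-K_t\zeta_t^-$ is only a cleaner packaging of the same computation.
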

\begin{proof}
See Appendix (\ref{preuves1}).
\end{proof}

We note that the terms depending on $\epsilon$: $\mathcal{E}_{x}^{\epsilon}(\theta,t)$, $\mathcal{F}_{x}^{\epsilon}(\theta,t)$  and $\mathcal{W}_{x}^{\epsilon}(\theta,t)$ (resp. $\mathcal{E}_{y}^{\epsilon}(\theta, t)$, $\mathcal{F}_{y}^{\epsilon}(\theta,t)$ and $\mathcal{W}_{y}^{\epsilon}(\theta,t)$) are the corrective terms coming from the bias of the parameters estimates and they do not appear when the model is well specified.\\
Besides, we can see in Eq.(\ref{f1}) that at time $t$, the propagation of the state error $e_{t}$ depends on $e_{t-1}$ but also on the state variable $x_{t-1}$. Notice in particular the term $\mathcal{F}_{x}^{\epsilon}(\theta,t)x_t$ that contributes non trivially to the auto-correlation of the process $\zeta_t$; this term is proportional to $\epsilon$ but, contrary to the other terms contribution to the expansion is not proportional to a filter error term (such as $e_t$) or to a noise term (such as $\eta_t$).

For linear and gaussian state space models we can express explicitely this auto-covariance for all $t$ and $h>0$:

We can now express the auto-covariance of the interpolation processus $(\zeta_t)_{t \geq 0}$. 
\begin{proposition}\label{Theo2}
Let $(\zeta_t)_{t \geq 0}$ defined in \eqref{f2} and $h>0$, we have, keeping leading contributions
\begin{eqnarray*}
\C(\zeta_t, \zeta_{t-h})&\cong &C_{\theta} \C(e_t,e_{t-h}) C_{\theta}^\ast+C_{\theta}\C(e_t,x_{t-h})\mathcal{F}_{y}^{\epsilon}(\theta,t-h)^\ast\notag\\
&+&C_\theta\C(e_t,\varepsilon_{t-h})(\sigma-\epsilon\frac{\partial\sigma_\theta}{\partial\theta}) +\mathcal{F}_{y}^{\epsilon}(\theta,t) \C(x_t,e_{t-h}) C_{\theta}^\ast\\
 &+& \mathcal{F}_{y}^{\epsilon}(\theta,t)\C(x_t,x_{t-h})\mathcal{F}_{y}^{\epsilon}(\theta,t-h)^\ast,
\end{eqnarray*}
where the various covariance terms can be computed explicitely. 


The example of the computation of the most complex covariance term ($\C(e_t,e_{t-h})$) is detailed in the Appendix B.
\end{proposition}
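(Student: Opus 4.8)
The plan is to expand $\zeta_t$ using the formula \eqref{f2} from Theorem \ref{prop_ekf}, substitute it into the bilinear form $\C(\zeta_t,\zeta_{t-h})$, and keep only the leading-order contributions in $\epsilon$. First I would write $\zeta_t = C_\theta e_t + \sigma_\theta \varepsilon_t + \mathcal{F}_y^\epsilon(\theta,t)x_t + \mathcal{E}_y^\epsilon(\theta,t)+\mathcal{W}_y^\epsilon(\theta,t) + o(\epsilon)$, and the same with $t$ replaced by $t-h$. Since $\mathcal{E}_y^\epsilon,\mathcal{F}_y^\epsilon,\mathcal{W}_y^\epsilon$ are all $O(\epsilon)$, any product of two such terms is $O(\epsilon^2)$ and is dropped. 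Likewise, the deterministic term $\mathcal{E}_y^\epsilon$ contributes nothing to a covariance (covariance with a constant vanishes), and $\mathcal{W}_y^\epsilon = -\epsilon\frac{\partial\sigma_\theta}{\partial\theta}\varepsilon_t$ will only survive through its coupling with the genuinely correlated pieces; but since $\varepsilon_{t-h}$ for $h>0$ is independent of $e_{t-h}$, $x_{t-h}$, and of $\varepsilon_t$, the only place it can enter at leading order is merged into the $\sigma_\theta\varepsilon_{t-h}$ term, which is why the stated result carries the combination $(\sigma-\epsilon\frac{\partial\sigma_\theta}{\partial\theta})$ rather than $\sigma$ alone.

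Next I would expand the bilinear form term by term. The surviving pairings are: $C_\theta e_t$ against $C_\theta e_{t-h}$, giving $C_\theta \C(e_t,e_{t-h})C_\theta^\ast$; $C_\theta e_t$ against $\mathcal{F}_y^\epsilon(\theta,t-h)x_{t-h}$, giving $C_\theta\C(e_t,x_{t-h})\mathcal{F}_y^\epsilon(\theta,t-h)^\ast$; $C_\theta e_t$ against $(\sigma_\theta - \epsilon\frac{\partial\sigma_\theta}{\partial\theta})\varepsilon_{t-h}$, giving $C_\theta\C(e_t,\varepsilon_{t-h})(\sigma-\epsilon\frac{\partial\sigma_\theta}{\partial\theta})^\ast$; and symmetrically $\mathcal{F}_y^\epsilon(\theta,t)x_t$ against $C_\theta e_{t-h}$ and against $\mathcal{F}_y^\epsilon(\theta,t-h)x_{t-h}$ — the last of these is kept even though it is $O(\epsilon^2)$ because $\C(x_t,x_{t-h})$ is $O(1)$ and large, so it is the dominant ``parameter-error'' signal (this matches the emphasis in the remark following Theorem \ref{prop_ekf}). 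The term $\sigma_\theta\varepsilon_t$ against anything at time $t-h$ vanishes for $h>0$ by independence of the observation noise. Collecting these pieces yields exactly the displayed expression. The phrase ``the various covariance terms can be computed explicitely'' is then justified by noting that $e_t$ and $x_t$ both satisfy linear recursions driven by Gaussian white noise (the recursion \eqref{f1} for $e_t$ and \eqref{linearisemodel} for $x_t$), so all the cross-covariances $\C(e_t,e_{t-h})$, $\C(e_t,x_{t-h})$, $\C(x_t,x_{t-h})$, $\C(e_t,\varepsilon_{t-h})$ propagate by standard discrete Lyapunov-type recursions; the worked instance $\C(e_t,e_{t-h})$ is deferred to Appendix \ref{preuves2}.

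The main obstacle is bookkeeping rather than conceptual: one must be scrupulous about which $O(\epsilon)$ and $O(\epsilon^2)$ terms are retained versus discarded, and in particular justify why the single $O(\epsilon^2)$ term $\mathcal{F}_y^\epsilon(\theta,t)\C(x_t,x_{t-h})\mathcal{F}_y^\epsilon(\theta,t-h)^\ast$ is kept while other $O(\epsilon^2)$ terms are not — the honest statement is that it is the leading term among those that do \emph{not} decay with $h$ and are not suppressed by a filter-error factor, so it dominates the auto-covariance at moderate-to-large lags, which is precisely the feature exploited by the detection method. A secondary point requiring a little care is the transpose structure: because $\C(\zeta_t,\zeta_{t-h})$ is generally not symmetric in $t \leftrightarrow t-h$ (the indices differ), one must keep the two mixed $\mathcal{F}_y^\epsilon$–$C_\theta$ cross terms separate and in the correct order, which the displayed formula does. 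I would also remark that the same $\varepsilon$-coupling argument shows $\C(e_t,\varepsilon_{t-h})$ is itself $O(\epsilon)$ or decays in $h$ — so that third term is lower order and could be absorbed into the error, but it is retained here for transparency.
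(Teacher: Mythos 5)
Your proposal is correct and follows essentially the same route as the paper: expand $\zeta_t$ via Eq.~\eqref{f2}, use bilinearity of the covariance, drop the deterministic term $\mathcal{E}_{y}^{\epsilon}$ and the pairings killed by independence of $\varepsilon_t$ from the past (for $h>0$), absorb $\mathcal{W}_{y}^{\epsilon}$ into the $(\sigma-\epsilon\frac{\partial\sigma_\theta}{\partial\theta})\varepsilon$ factor, and reduce the explicit evaluation of the remaining covariances to the linear recursions for $e_t$ and $x_t$ as worked out for $\C(e_t,e_{t-h})$ in Appendix~\ref{preuves2}. Your discussion of why the $O(\epsilon^2)$ term $\mathcal{F}_{y}^{\epsilon}(\theta,t)\C(x_t,x_{t-h})\mathcal{F}_{y}^{\epsilon}(\theta,t-h)^\ast$ is retained is a welcome clarification of the paper's phrase ``keeping leading contributions.''
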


\section{Parameter estimation: method}\label{EstimParam}


The main idea of the approach consists in minimizing empirically the auto-covariance between $(\zeta_t)$ in order to reduce as far as possible the corrective terms that appear in the propagation equations \eqref{f1} and \eqref{f2}. The results obtained on a variety of examples detailed later in the article show the meaningfulness of the approach.\\

Let us denote $J(\nu)$ the following objective function:

\begin{equation*}
J(\nu)= \sum_{j=1}^{m} \sum_{h\geq 0} \Gamma^{j}_{\nu}(h)
\end{equation*}
where $\Gamma^{j}_{\nu}(h)$ denotes the auto-covariance of the $j^{th}$ coordinate $(\zeta^{j}_t)$ of the vector $(\zeta_t)$ for the lag $h$ when model parameters are chosen to be $\theta-\nu$ (recall that we know only $\theta$ and want to estimate $\theta_0$). 

We use as estimator the empirical covariance given by:

\begin{definition}
\begin{equation}
\hat{\Gamma}^{j}_{\nu}(h)=\frac{1}{N-1} \sum_{t=h+1}^{N}\bigg(\zeta_t^{j}-\overline{\zeta}^{j}\bigg)\bigg(\zeta_{t-h}^{j}-\overline{\zeta}^{j}\bigg)\label{autocov_empir}
\end{equation}
where $\overline{\zeta}^{j}$ is the mean of the $\zeta_t^{j}$. 
\end{definition}

We will therefore minimize the following objective function

\begin{equation}
\hat{J}(\nu)=\sum_{j=1}^{m}\sum_{h=1}^{h^{*}}\hat{\Gamma}^{j}_{\nu}(h) \label{obj_func}
\end{equation}

As we will see in the numerical application, the choice of the lag range $h^{*}$ has no strong impact on the results. \\

An estimator of the bias $\epsilon$ is obtained as

\begin{equation*}
\hat{\epsilon}= \arg\min \hat{J}(\nu).
\end{equation*}
 This means that $\nu$ is estimated in function of the tracking error $(\zeta_t)_{t\geq 1}$.\\

%

\section{Applications}

\subsection{Estimation of the linear AR(1) process}\label{Estimation_AR}

Let us consider the following autoregressive process:

\begin{eqnarray}
\left\lbrace\begin{array}{ll}
y_t=\alpha x_t+\sigma\varepsilon_t\\
x_t=\gamma x_{t-1}+\beta\eta_t
\end{array}
\right.
\end{eqnarray}
where $\alpha=3$ and $\gamma=0.9$. The noises $\varepsilon_t$ and $\eta_t$ are supposed i.i.d. with centered and standard Gaussian law. The variances $\sigma^{2}$ and $\beta^{2}$ are equal to $0.2$ and $0.1$ respectively.\\

%
%
We have run a Kalman filter estimation by assuming that the two parameters $\gamma$ and $\alpha$ are biased. We choose $\theta^{(0)}=(\gamma^{(0)},\alpha^{(0)})=(0.8, 2.8)$ and $N=500$ to construct the function $\hat{J}(\epsilon)$ and we apply the minimization procedure to estimate the true parameter $\theta_{0}=(\gamma,\alpha)$.\\
The Mean Squared Error (MSE) was used to measure the quality of the estimation of $\theta_0$ with $MC$ (number of Monte Carlo simulations) equal to 100. The result is summarized in Table \ref{TableMseAR2}.\\

\begin{center}
\captionof{table}{MSE for $\theta=(\gamma,\alpha)$ for MC=100 with $h^{*}=2$ and $N=500$.}\label{TableMseAR2}
\begin{tabular}{|l|c|c|}
  \hline

& $\hat{\gamma}$ &   $\hat{\alpha}$ \\
  \hline
& 0.907   & 2.97 \\
MSE & $0.0064$ & $0.04$\\ 
  \hline
 \hline  
CPU (sec)  & 0.22  &\\
\hline  
\end{tabular}
\end{center}

\subsection{Estimation of a weakly Nonlinear model }\label{Estimation_NL}

Let us consider the following nonlinear model

\begin{eqnarray}
\left\lbrace\begin{array}{ll}
y_t=x_t+\sigma\varepsilon_t\\
x_t=\alpha \sqrt{ (x_{t-1}-\gamma)}+\beta\eta_t
\end{array}
\right.
\end{eqnarray}
where $\alpha=5$ and $\gamma=0.008$. The noises $\varepsilon_t$ and $\eta_t$ are supposed i.i.d. with centered and standard gaussian law. The variances $\sigma^{2}$ and $\beta^{2}$ are equal to $0.2$ and $0.1$ respectively. \\

Since this model is nonlinear we apply an EKF estimation by assuming that the two parameters $\gamma$ and $\alpha$ are biased. For the initialisation we choose $\theta^{(0)}=(\gamma^{(0)},\alpha^{(0)})=(0.007, 5.1)$ and $N=500$ to construct the function $\hat{J}(\epsilon)$ given in \eqref{obj_func} and we apply the minimization procedure to estimate the parameter $\theta_{0}$. The results are summarized in Table \ref{TableMseNL2}.\\

\begin{center}
\captionof{table}{MSE for $\theta=(\gamma,\alpha)$ for MC=100 with $h^{*}=2$ and $N=500$.}\label{TableMseNL2}
\begin{tabular}{|l|c|c|}
  \hline
& $\hat{\alpha}$ &   $\hat{\gamma}$ \\
  \hline
& 4.99   & 0.0081 \\
MSE & $0.01$ & $7.03\times 10^{-8}$\\ 
  \hline 
  \hline  
CPU (sec)  & 0.23  &\\
\hline   
\end{tabular}
\end{center}

\subsection{Estimation of a strongly Nonlinear model: the Heston model}
In 1993, Heston extends the Black-Scholes model by making the volatility parameter stochastic. More precisely, the volatility is modeled by a Cox Ingersoll Ross (CIR) process and the stock price follows the well-known Black-Scholes stochastic differential equation.
The Heston stochastic volatility model is widely used in practice for option pricing.
The reliability of the calibration of its parameters is important since a possible bias will be repercuted on the volatility estimates and, ultimately, on option prices and hedging strategies.

The model is given by

\begin{eqnarray}\label{state_space_Heston}
\left\lbrace\begin{array}{ll}
\frac{dS_t}{S_t}= rdt+\sqrt{v_t}dW_t, \qquad S_0 \geq 0\\
dv_t=\kappa(\gamma-v_t)dt+\beta \sqrt{v_t}dW_t^2, \qquad v_0 \geq 0 
\end{array}
\right.
\end{eqnarray} 
where $W=\{W_t, t \geq 0 \}$ and $W^2=\{W^2_t, t \geq 0 \}$ are two correlated standard Brownian motions such that $\C(dW_t, dW_t^2)=\rho dt$, $v_0$ is the initial variance, $\kappa$ the mean reversion rate, $\gamma$ the long run variance and $\beta$ the volatility of variance. We set $\theta_0:=(\kappa,\gamma,\beta,\rho)$.

The volatility process is always positive and cannot reach zero under the Feller condition $2\kappa\gamma >\beta^2$. Furthermore, under this assumption, the process $v_t$ has a Gamma invariant distribution $\Gamma(\alpha_1, \alpha_2)$ with $\alpha_1=\frac{2\kappa \gamma}{\beta^{2}}$ and $\alpha_2=\frac{\beta^{2}}{2\kappa}$.

\subsubsection{Simulated Data}

We sample the trajectory of the variance CIR\index{CIR, Cox-Ingersoll-Ross} with a time step $\Delta=1$ day over $t=1,\cdots,N$ days. Conditionally to this trajectory, we sample the trajectory of the logarithm stock price $\log S_t$ given by It{\^o}'s formula and discretized by a classical Euler scheme.\\
For the CIR process we use the discrete time transition equation of a CIR process given by a  a non-central chi-square distribution up to a constant:
$$p_{\theta_0}(v_t\vert v_{t-1})=2c\chi^{2}(2d+2, 2w),$$
where $2d+2$ is the degree of freedom, $2w$ is the parameter of non-centrality and
\begin{equation*}
c=\frac{2\kappa}{\beta^{2}(1-e^{-\kappa \Delta})}, \quad w=c v_{t-\Delta} e^{-\kappa \Delta}, \quad d=\frac{2\kappa\gamma}{\beta^{2}}-1.
\end{equation*}

$\newline$
We assume that each day $t$, the observation $y_t$ corresponds to nine call prices for different strikes $(K_i, T_j)_{1\leq i,j \leq 3}$. Here, $K=\left( K_1, K_2, K_3 \right)=\left( 90\%, 100\%, 110\% \right)$ of the stock prices $S_t$ and $T=\left( T_1, T_2, T_3 \right)= \left(0.1, 0.5, 1\right)$. The data length is $N=50$ days. 

Then, the discrete time Heston model is given by the following nonlinear state space model with additive noises:

\begin{eqnarray}\label{discrete_state_space_Heston}
\left\lbrace\begin{array}{ll}
y_t=C_{t}(v_{t}, S_{t}, \theta_0)+\sigma\varepsilon_{t}\\
v_{t}=\Psi(v_{t-1}, \theta_0, \Delta)+\Phi^{1/2}(v_{t-1}, \theta_0, \Delta)\eta_{t}
\end{array}
\right.
\end{eqnarray} 
where the functions $\Psi$ and $\Phi$ (see \cite{chuan}) are given by 

\begin{eqnarray*}
&&\Psi(v_t, \theta_0, \Delta)=\E_{\theta_0}[v_{t+1}\vert v_t]=\gamma (1-e^{-\kappa \Delta})+e^{-\kappa \Delta}v_{t} \\
&&\Phi(v_t, \theta_0, \Delta)=\Var_{\theta_0}[v_{t+1}\vert v_t]=\gamma\frac{\beta^{2}}{2\kappa}(1-e^{-\kappa \Delta})^2+\frac{\beta^{2}}{\kappa}e^{-\kappa \Delta}(1-e^{-\kappa \Delta})v_{t}
\end{eqnarray*} 

The call prices $C_t(v_t, S_t, \theta_0)$ are computed by the Heston formula given in \cite{He93}. We assume that these prices are observed with Gaussian measurement error $\varepsilon_t$ with zero mean and variance $R=\sigma\sigma^*$ independent of $\theta_0$. These measurement errors can reflect the presence of different prices (bid-ask prices, closing prices, human errors in data handling) in financial markets.\\
For the vector of parameters we choose $\theta_0=(\kappa, \gamma, \beta, \rho)=(4, 0.03, 0.4, -0.5)$ which is consistent with empirical applications of daily data (see \cite{chen}) and the risk free interest rate $r$ is equal to 0.05.

\subsubsection{Empirical detection of misspecified models}\label{detection}

Since the Heston model is not linear, we run an EKF estimation of $v_t$ by assuming for convenience that only the $i^{{\rm th}}$ coordinate of the estimator of $\theta$ denoted by $\theta_i$ is biased, the others $(\theta_j)_{j=1,\cdots,4}$ are equal to $\theta_{0,j}$ for $j\neq i$.\\
For each parameter $(\theta_i)_{i=1,\cdots,4}$, we represent the autocorrelation of the interpolation process $(\zeta^l_t)_{l=1,\ldots,9}$.\\
For each parameter of the Heston model, we note a presence of correlation of the interpolation process when the model is misspecified (see Figures \ref{f7_e} up to \ref{f8_e}). We can also remark that this correlation is more important for the mean speed reversion parameter $\kappa$ and for the long run variance $\gamma$.\\

\begin{center}
\includegraphics[width=129mm, height=80mm]{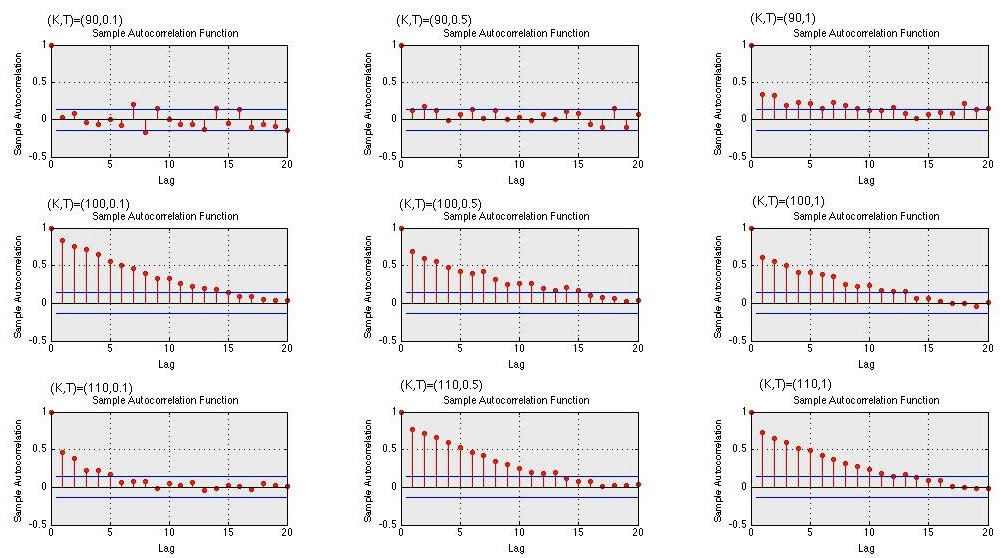}
\captionof{figure}{\textbf{Parameter $\theta_1=\kappa$:} Autocorrelation of $(\zeta_t^l)_{l=1,\dots,9}$ of the EKF estimation with $\theta=(4.48, 0.03, 0.4, -0.5)$.}\label{f7_e}
\end{center}

\begin{center}

\includegraphics[width=129mm, height=70mm]{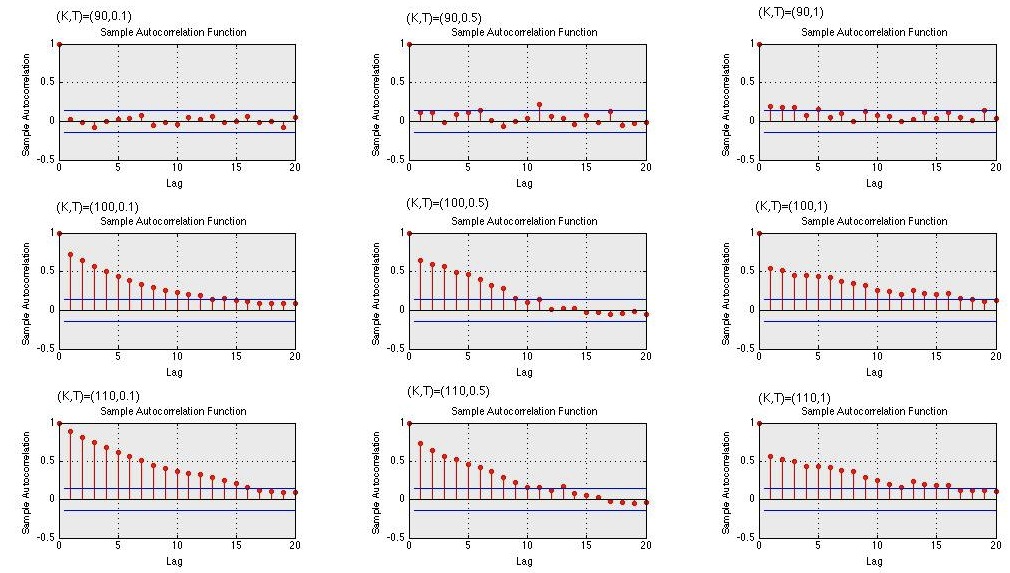}
\captionof{figure}{\textbf{Parameter $\theta_2=\gamma$:} Autocorrelation of $(\zeta_t^l)_{l=1,\dots,9}$ of the EKF estimation with $\theta=(4, 0.036, 0.4, -0.5)$.}\label{f9_e}
\end{center}

\begin{center}
\includegraphics[width=129mm, height=70mm]{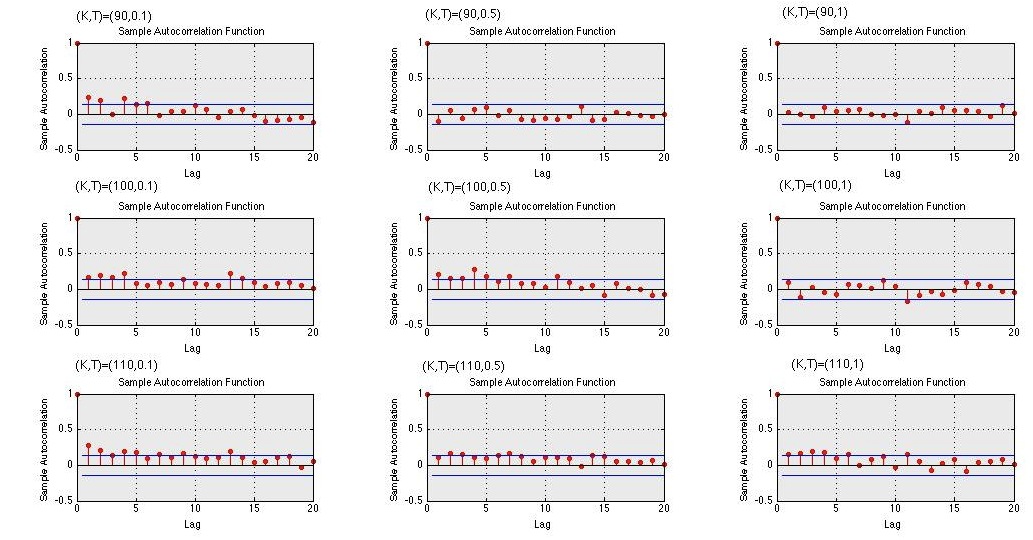}
\captionof{figure}{\textbf{Parameter $\theta_3=\beta$:} Autocorrelation of $(\zeta_t^l)_{l=1,\dots,9}$ of the EKF estimation with $\theta=(4, 0.03, 0.448, -0.5)$.}\label{f6_e}
\end{center}

\begin{center}
\includegraphics[width=129mm, height=70mm]{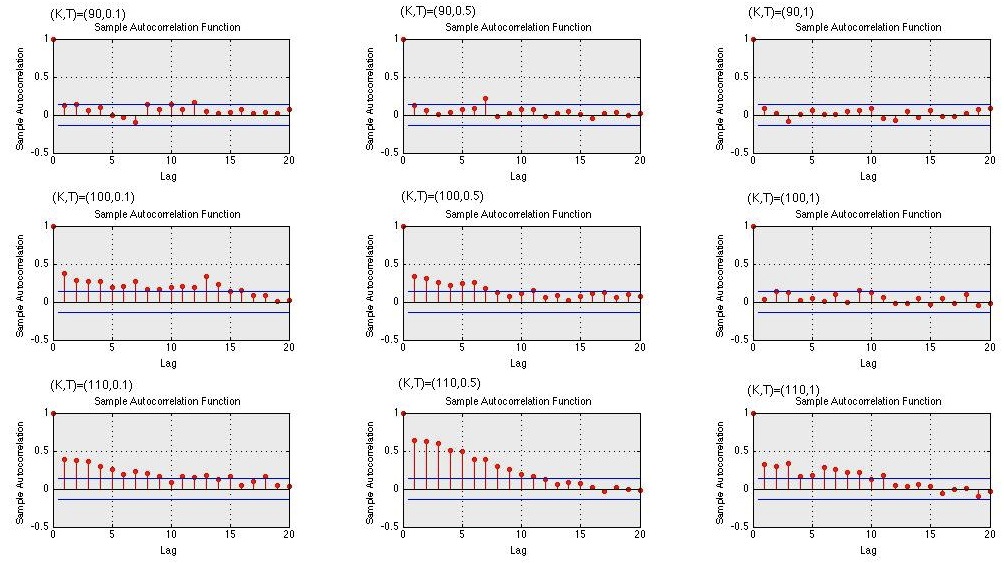}
\captionof{figure}{\textbf{Parameter $\theta_4=\rho$:} Autocorrelation of $(\zeta_t^l)_{l=1,\dots,9}$ of the EKF estimation with $\theta=(4, 0.03, 0.4, -0.56)$.}\label{f8_e}
\end{center}

Furthermore, in order to illustrate the behaviour of the autocorrelation with respect to the bias $\epsilon$ we apply an EKF estimation by considering the three following cases (only for the speed mean reversion parameter, the conclusion is the same for the others parameters): a) $\kappa=4$ (that is the model is well-specified) ; b) $\kappa=4.48$ ; c) $\kappa=4.96$. The comparison is illustrated in Figure \ref{f10}. As expected, we observe that no correlation appaers when the model is well-specified (that is in Case a) and in return when a bias is introduced a correlation of the interpolation process appears and most importantly this correlation growths with the bias (see Figure \ref{f10} Case b and c).

\begin{center}
$$\includegraphics[width=86mm, height=70mm]{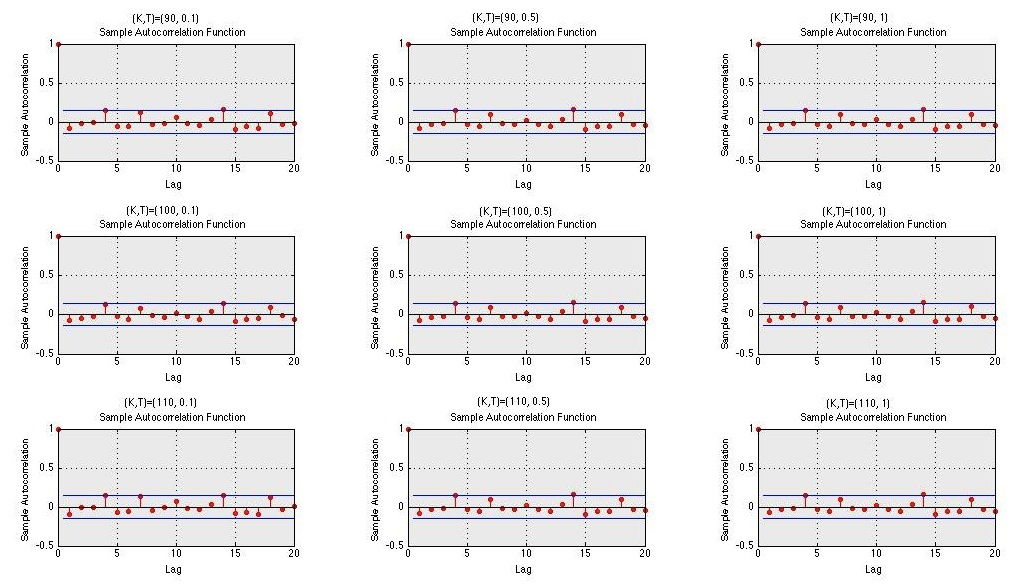}$$
$$\includegraphics[width=86mm, height=70mm]{autocorr_residue_kappa_bruite.jpg}
\includegraphics[width=86mm, height=70mm]{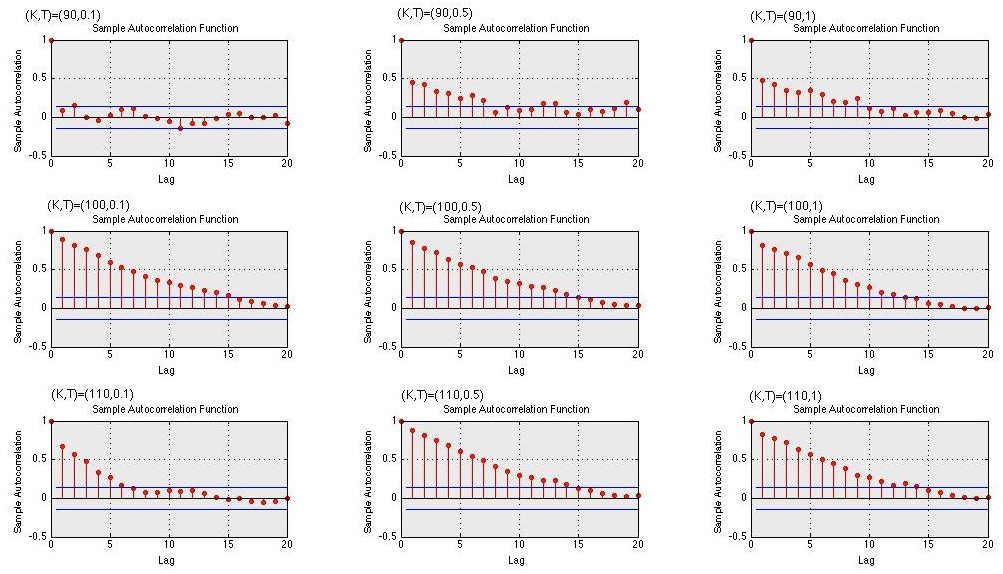}$$
\captionof{figure}{\label{f10} Autocorrelation of the $(\zeta_t^l)_{l=1,\dots,9}$ for the three cases. Top: Case a. Bottom Left: Case b. Bottom Right: Case c. }
\end{center}

\subsubsection{Parameter estimation}

In Figure \ref{corr}, we represented the objective function $\hat{J}(\nu)$ defined in \eqref{obj_func} with respect to the parameters of the Heston model.
We represent only $\hat{J}(\nu)$ for the long run variance parameter $\gamma$ since for the others parameters the result is the same. We can see that the function $\hat{J}$ is minimal for the true value of $\gamma$, that is $\gamma=0.03$ (see Figure \ref{corr}).\\

\begin{center}
\includegraphics[width=86mm, height=70mm]{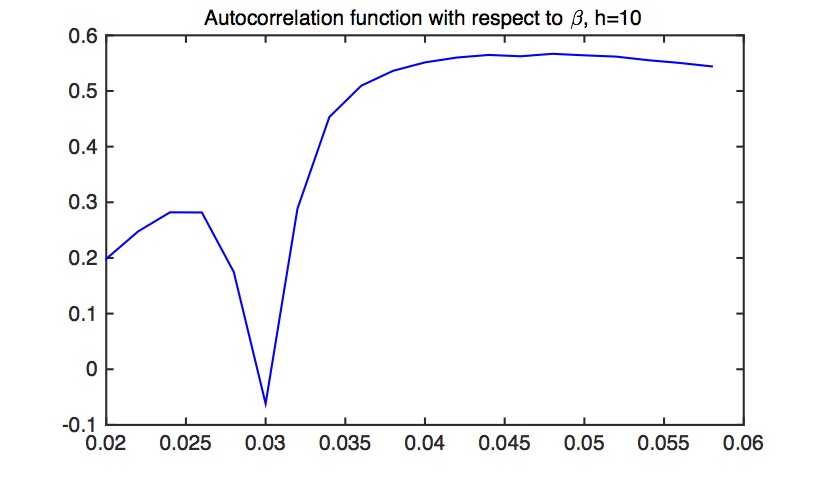}
\captionof{figure}{Function $\hat{J}(\nu)$ with respect to the parameter $\gamma$ and $h^{*}=10$.}\label{corr}
\end{center}

\subsubsection{Estimation of the long run variance in the Heston model}\label{Estimation_beta}

In a first step, we have run an EKF estimation by assuming that only the long run variance parameter $\gamma$ is biased. We choose $\gamma=0.025$ and we recall that its true value is $0.03$. The number of observations used for the construction of the function $\hat{J}(\nu)$ given in \eqref{obj_func} is here $N=100$ and we apply the minimization procedure to recover the parameter $\gamma$.\\
The MSE was used to measure the quality of the estimation of the parameter $\gamma$ with $MC$ equal to 50.
The results are summarized in Table \ref{TableMse} and Figure \ref{corr2}.\\

\begin{center}
\captionof{table}{MSE for $\gamma$ for MC=50 with $h^{*}=2$ and $N=100$.}\label{TableMse}
\begin{tabular}{|l|c|c|}
  \hline
 MC & 50  \\
   \hline
 $\hat{\gamma}$ & 0.03 \\
  \hline
 MSE & $4.0585e-09$\\ 
  \hline
  CPU (sec) & 300  \\
    \hline
\end{tabular}
\end{center}

$$\includegraphics[width=86mm, height=70mm]{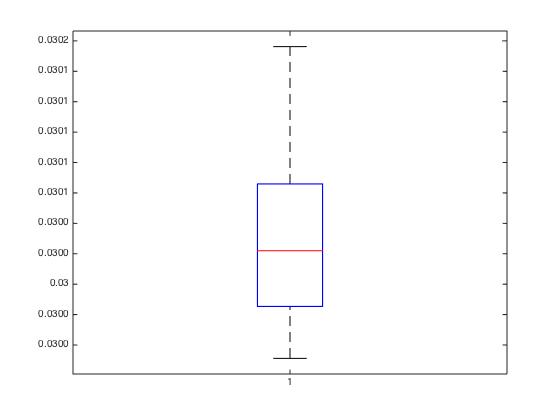}$$
\captionof{figure}{Boxplot of the estimation of $\beta$ for $MC=50$ and $h^{*}=2$. }\label{corr2}
$\newline$

\subsubsection{Sensibility w.r.t the lag $h$}

In order to see the impact of the lag $h$ on the autocorrelation, we have run our approach for different lags $h$ and compute the MSE (with MC=50). We note (see Table \ref{TableMse2}) that the choice of $h$ has not a strong impact on the results. Hence, for the next numerical application we choose $h$ equal to $8$.\\

\begin{center}
\captionof{table}{MSE for $\gamma$ for MC=50 for different lags $h$ and $N=100$.}\label{TableMse2}
\begin{tabular}{|l|c|c|c|c|c|r|}
  \hline
 $h{*}$ & 2 & 6 & 8 & 10 \\
   \hline
 $\hat{\gamma}$ & 0.03 & 0.03 & 0.03& 0.03\\
  \hline
 MSE & $4.0585e-09$ & 2.5554e-09 & 2.3431e-09 & 2.7503e-09\\
    \hline
\end{tabular}
\end{center}
$\newline$
In Table \ref{TableMse3}, we illustrate the MSE for different number of observations, $N=20$ up to $110$. We see that the estimation is very bad for a small $N$ which is not suprising since in this case the empirical estimator $\hat{J}(\nu)$ of $J(\nu)$ is not consistent. As we expect, the MSE decreases with the number of observations. \\

\begin{center}
\captionof{table}{MSE for $\gamma$ for MC=50 and different numbers of observations $N$ and $h^{*}=8$.}\label{TableMse3}
\begin{tabular}{|l|c|c|c|c|c|c|c|r|}
  \hline
 $N$ & 20 &  30 & 50 & 70 & 80 & 90 & 100 & 110 \\
   \hline
 $\hat{\gamma}$ &   0.0018 &  0.0224 & 0.0274 & 0.0277 & 0.0279 & 0.0293 & 0.03 & 0.03\\
  \hline
 MSE & $8.84e-04$ & 1.46e-04 & 2.23e-05 & 1.82e-05  & 1.53e-05 & 5.09e-06 & 2.34e-09 & 2.31e-09 \\
    \hline
\end{tabular}
\end{center}

$\newline$

\subsubsection{Estimation of the Heston model}
In this part, we want to estimate all parameters of the Heston model. So, we consider that all parameters are biased with different bias (see $\theta^{(0)}$ on Table \ref{TableMse4} and \ref{TableMse5}) and that the true parameter $\theta_0$ is given by $\theta_0=(4, 0.03, 0.4, -0.5)$.

\newpage
\begin{center}
\captionof{table}{Estimation of $\theta_0=(4, 0.03, 0.4, -0.5)$ for MC=1, $N=100$ and $h^{*}=8$.}\label{TableMse4}
\begin{tabular}{|l|c|c|c|r|}
  \hline
 $\theta^{(0)}$ & 3.7809  &  0.0250 & 0.4294 & -0.5498\\
   \hline
 $\hat{\theta}$ &   3.9671  &  0.0301  & 0.4000 &  -0.4774\\
    \hline 
\end{tabular}
\end{center}
$\newline$
\begin{center}
\captionof{table}{Estimation of $\theta_0=(4, 0.03, 0.4, -0.5)$ for MC=1, $N=150$ and $h^{*}=8$.}\label{TableMse5}
\begin{tabular}{|l|c|c|c|r|}
  \hline
 $\theta^{(0)}$ &   3.7853 &   0.0250 &   0.4309 &  -0.5514\\
   \hline
 $\hat{\theta}$  &   3.9950   & 0.0302  & 0.4107 &  -0.4858\\
    \hline 
\end{tabular}
\end{center}
$\newline$
In Table  \ref{TableMse6} we repeat our procedure of estimation with MC equal to $50$. We note that our approach leads to estimate simultaneously all parameters. We also note that the long run variance parameter $\gamma$ and the speed mean reversion parameter $\kappa$ are easier to estimate than the others parameters. Furthermore, we have seen in Figures \ref{f7_e} and \ref{f9_e} of Section \ref{detection} that the correlation was more important for these two parameters.  \\

\begin{center}
\captionof{table}{Estimation of $\theta_0=(4, 0.03, 0.4, -0.5)$ for MC=50, $N=100$ and $h^{*}=8$.}\label{TableMse6}
\begin{tabular}{|l|c|c|c|r|}
   \hline
 $\hat{\theta}$  &   3.9970   & 0.0302  & 0.4087 &   -0.4836\\
    \hline
    MSE & 6.8984e-05 & 5.0153e-05 & 8.9008e-04 & 4.1223e-04\\ 
      \hline
\end{tabular}
\end{center}

For $m=1,\cdots, MC$, the choice of the initial condition for $\theta_{m}^{(0)}$ is $\theta_{m}^{(0)}\in 
\bigg[3.8+\tilde{\sigma} \mathcal{N}(0,1); 0.025+\tilde{\sigma}\mathcal{N}(0,1); 0.43+\tilde{\sigma}\mathcal{N}(0,1)); -0.55+\tilde{\sigma}\mathcal{N}(0,1)\bigg]$ with $\tilde{\sigma}=10^{-4}$ and where $\mathcal{N}(0,1)$ stands for the centered and standard gaussian law. 
 
\section{Comparison with the use of standard innovations}\qquad\\

Methods for detection of departures from optimality are usually based on the innovation process $(\zeta^{-}_t)_{t \geq 1}$. Performance analysis of Kalman filters based on the innovation was introduced in \cite{Wei1991}. In their papers, the authors propose a test based on the innovations for fault detection and a two-step Kalman filtering procedure to estimate the parameters. Let us mention also \cite{MR3331076} where in page 370 the authors give a short discussion on detecting unmodeled state dynamics by Fourier analysis of the filter innovations.\\

In this part, we compare our minimisation routine \eqref{obj_func} with the analogous minimisation routine when one replaces the interpolation process $(\zeta_t)_{t \geq 1}$ with the innovation process $(\zeta^{-}_t)_{t \geq 1}$ in order to estimate the parameters.\\
For this comparison we use the three models defined in the previous section and assume that only one parameter is biased for each model.\\
We note that the MSE is significantly smaller when one uses the interpolation process instead of the standard innovations and most importantly using the interpolation process to correct the bias is better for complex models with nonlinear effects. The results are summarized in Table \ref{TableMse7}.

\begin{center}
\captionof{table}{MSE: comparison with standard innovations: MC=50, $h^{*}=2$ and $N=100$ (\small {In bold: the parameter that we biased. In gray: the smallest MSE.)}}\label{TableMse7}
\begin{tabular}{|l|c|c|c|c|}
  \hline
Gaussian model: $\theta_0=(3,\mathbf{0.9})$ & $\gamma^{(0)}=0.8$ & $\hat{\gamma}(\zeta^{-})=0.93$ & $\hat{\gamma}(\zeta)=0.907$   \\
  & & $(0.0041)$ & \cellcolor{gray}$(0.0025)$\\
   \hline
   \hline
Nonlinear model: $\theta_0=(\mathbf{5},0.008)$ & $\alpha^{(0)}=5.7$ & $\hat{\alpha}(\zeta^{-})=4.79$ & $\hat{\alpha}(\zeta)=4.99$ \\
& & $(0.044)$ & \cellcolor{gray}$(0.011)$\\
  \hline
  \hline
Heston model: $\theta_0=(4,\mathbf{0.03}, 0.4,-0.5)$ & $\gamma^{(0)}=0.022$ & $\hat{\gamma}(\zeta^{-})=0.028$ & $\hat{\gamma}(\zeta)=0.03$ \\
& & $(1.5.e^{-6})$ & \cellcolor{gray}$(2.3.e^{-9})$\\
  \hline 
\end{tabular}

\end{center}

\section{Conclusion}
In this paper, we propose a new approach to detect and estimate parameters of weakly nonlinear hidden states models. These models are supposed to be misspecified due to the choice of uncorrect parameters. We propose to exploit the autocorrelation of a suitably defined interpolation process based on the estimate of the hidden state with the biased parameters. We vary then the model parameters around the initial misspecified value and apply an optimization procedure to minimize the auto-covariance of this process. We show that this approach leads to detect misspecified models and to estimate the parameters. The computing time is fast and the implementation is easy. Furthermore, we note that the autocorrelation lag parameter $h$ has not a strong impact on the results. All results are illustrated on various models of increasing complexity and in particular on the Heston model widely used in practice for portfolio hedging. 
\newpage

\appendix \label{appendice}

\section{\label{preuves1} Proof of Theorem \ref{prop_ekf}:}\quad\\

The proof is essentially based on a first order Taylor expansion of the functions $b$ and $h$ with respect to $\theta$. We have
\begin{eqnarray}
e_t&=&x_t-\hat x_t=x_t-\hat x_t^-+(\hat x_t-\hat x_t^-)\nonumber\\
&=&u_{t}(\theta_0)+A_{\theta_0}x_{t-1}+ \beta_{\theta_0}\eta_t-u_{t}(\theta)-A_{\theta}\hat{x}_{t-1}-K_t(y_t-\hat{y}_t^{-})\nonumber\label{erreur_ekf1_inter}
\end{eqnarray}
where we used $\hat x_t-\hat x_t^-=-K_tC_\theta\hat x_t^--K_t(y_t-d_t(\theta))=-K_t(y_t-\hat{y}_t^{-}).$
Since
$u_t(\theta_0) = u_t(\theta)-\epsilon \frac{\p u}{\p \theta}(\theta)+o(\epsilon)$, and similarly for the other functions of $\theta$, we get
\begin{eqnarray}\label{erreur_ekf1}\ \
e_t&=&A_{\theta}e_{t-1}-\epsilon\frac{\p u_t}{\p \theta}(\theta)-\epsilon\frac{\p A_{\theta}}{\p \theta}x_{t-1}+ \beta_{\theta}\eta_t-\epsilon \frac{\p  \beta_{\theta}}{\p \theta}\eta_t-K_t(y_t-\hat{y}_t^{-})+o(\epsilon)
\end{eqnarray}
Furthermore,
\begin{eqnarray}
y_t-\hat{y}_t^{-}&=&d_{t}(\theta_0)+C_{\theta_0}x_{t}+ \sigma_{\theta_0}\varepsilon_t-d_{t}(\theta)-C_{\theta}\hat x_t^-,\nonumber
\end{eqnarray}
so that:
\begin{eqnarray}
y_t-\hat{y}_t^{-}&=&-\epsilon \frac{\p d_{t}}{ \p \theta}(\theta)+( \sigma_{\theta}-\epsilon \frac{\p  \sigma_{\theta}}{\p \theta})\varepsilon_t + (C_{\theta}-\epsilon \frac{\p C_{\theta}}{\p \theta})\left(u_{t}(\theta_0)+A_{\theta_0}x_{t-1}+ \beta_{\theta_0}\eta_t \right)\nonumber\\
&&-C_{\theta}\hat x_t^-+o(\epsilon)\nonumber
\end{eqnarray}
Rewriting
\begin{equation*}
\hat x_t^- = A_{\theta}\hat x_{t-1}+u_t(\theta)
\end{equation*}
we get:
\begin{eqnarray}\label{erreur_ekf2}
y_t-\hat{y}_t^{-}&=&C_{\theta}A_{\theta}x_{t-1}-C_{\theta}A_{\theta}\hat{x}_{t-1}+ \sigma_{\theta}\varepsilon_t+C_{\theta} \beta_{\theta}\eta_t\\
&&-\epsilon\left(\frac{\p d_{t}}{ \p \theta}(\theta)+C_{\theta}\frac{\p A_{\theta}}{\p \theta}x_{t-1}+\frac{\p  \sigma_{\theta}}{\p \theta}\varepsilon_t+ C_{\theta}\frac{\p u_{t}}{ \p \theta}(\theta)+C_{\theta}\frac{\p  \beta_{\theta}}{\p \theta}\eta_t\right.\nonumber\\
&&+\left.\frac{\p C_{\theta}}{\p \theta}u_t(\theta)+\frac{\p C_{\theta}}{\p \theta}A_{\theta}x_{t-1}+\frac{\p C_{\theta}}{\p \theta} \beta_{\theta}\eta_{t}\right)+o(\epsilon) \nonumber
\end{eqnarray}
Define,

\begin{eqnarray*}
&&\mathcal{E}_{y^{-}}^{\epsilon}(\theta,t)=-\epsilon \left(\frac{\p d_t}{\p \theta}(\theta)+ C_{\theta}\frac{\p u_t}{\p \theta}(\theta)+\frac{\p C_{\theta}}{\p \theta}u_t(\theta)\right),\\
&&\mathcal{F}_{y^{-}}^{\epsilon}(\theta,t)=-\epsilon \left(C_{\theta}\frac{\p A_{\theta}}{\p \theta}+\frac{\p C_{\theta}}{\p \theta}A_{\theta}\right),\\
&&\mathcal{W}_{y^{-}}^{\epsilon}(\theta,t)=-\epsilon \left( \frac{\p C_{\theta}}{\p \theta} \beta_{\theta}\eta_t+C_{\theta} \frac{\p  \beta_{\theta}}{\p \theta}\eta_t+\frac{\p  \sigma_{\theta}}{\p \theta}\varepsilon_t\right),
\end{eqnarray*}
we obtain:

\begin{eqnarray*}
\zeta_t^{-}&=&y_t-\hat y_t^-\nonumber\\
&=&C_{\theta} A_{\theta}e_{t-1}+  \sigma_{\theta}\varepsilon_t+C_{\theta} \beta_{\theta}\eta_t+\mathcal{E}_{y^{-}}^{\epsilon}(\theta,t)+\mathcal{F}_{y^{-}}^{\epsilon}(\theta, t)x_{t-1}+\mathcal{W}_{y^{-}}^{\epsilon}(\theta, t)+o(\epsilon)
\end{eqnarray*}
By combining Eq.(\ref{erreur_ekf1}) and Eq.(\ref{erreur_ekf2}), we have:

\begin{eqnarray*} 
e_t&=&x_{t}-\hat x_t\nonumber\\
&=&(I_{n\times n}-K_t C_{\theta})A_{\theta}e_{t-1}-K_t  \sigma_{\theta}\varepsilon_t-K_t C_{\theta} \beta_{\theta}\eta_t+ \beta_{\theta}\eta_t+\mathcal{E}_{x}^{\epsilon}(\theta, t)+\mathcal{F}_{x}^{\epsilon}(\theta,t)x_{t-1}+\mathcal{W}_{x}^{\epsilon}(\theta,t)+o(\epsilon)
\end{eqnarray*} 
where,

\begin{eqnarray*} 
&&\mathcal{E}_{x}^{\epsilon}(\theta,t)=-\epsilon \left( (I_{n\times n}-K_t C_{\theta})\frac{\p u_t}{\p \theta}(\theta)-K_t\frac{\p d_t}{\p \theta}(\theta)-\frac{\p C_{\theta}}{\p \theta}u_t(\theta)\right),\\
&&\mathcal{F}_{x}^{\epsilon}(\theta,t)=-\epsilon \left( (I_{n\times n}-K_t C_{\theta})\frac{\p A_{\theta}}{\p \theta}-\frac{\p C_{\theta}}{\p \theta}A_{\theta}\right),\\
&&\mathcal{W}_{x}^{\epsilon}(\theta,t)=-\epsilon \left( \frac{\p  \beta_{\theta}}{\p \theta}\eta_t-K_t C_{\theta} \frac{\p  \beta_{\theta}}{\p \theta}\eta_t-K_t  \beta_{\theta} \frac{\p C_{\theta_0}}{\p \theta}\eta_t-K_t \frac{\p  \sigma_{\theta}}{\p \theta}\varepsilon_t\right),
\end{eqnarray*}
One can deduce the \emph{Propagation of the interpolations (or residues a posteriori):} 

\begin{eqnarray*}
\zeta_t&=&y_t-\hat y_t=d_t(\theta_0)-d_t(\theta)+C_{\theta_0}x_t+\sigma_{\theta_0}\varepsilon_t-C_\theta\hat x_t\\
&=&d_t(\theta_0)-d_t(\theta)+(C_{\theta}-\epsilon \frac{\p C_{\theta}}{\p \theta})x_t-C_{\theta}\hat x_t+( \sigma_{\theta}-\epsilon \frac{\p  \sigma_{\theta}}{\p \theta})\varepsilon_t+o(\epsilon)\\
&=&C_{\theta} e_{t}+  \sigma_{\theta}\varepsilon_t-\epsilon \left(\frac{\p d_t}{\p \theta}(\theta)+\frac{\p C_{\theta}}{\p \theta}x_t+\frac{\p  \sigma_{\theta}}{\p \theta}\varepsilon_t\right)+o(\epsilon)
\end{eqnarray*}

By defining:

\begin{eqnarray*} 
&&\mathcal{E}_{y}^{\epsilon}(\theta, t)=-\epsilon\frac{\p d_t(\theta)}{\p \theta}\\
&&\mathcal{F}_{y}^{\epsilon}(\theta, t)=-\epsilon \frac{\p C_{\theta}}{\p \theta}\\
&&\mathcal{W}_{y}^{\epsilon}(\theta, t)=-\epsilon\frac{\p  \sigma_{\theta}}{\p \theta}\varepsilon_t
\end{eqnarray*}Eq.(\ref{f2}) follows. \qed\\

\section{\label{preuves2} Covariances in Proposition\ref{Theo2}:} 

We have, up to $o(\epsilon )$ terms that are neglected

\begin{eqnarray*} 
e_t &=&(I_{n\times n}-K_t C_{\theta})A_{\theta}e_{t-1}-K_t( \sigma_{\theta}\varepsilon_t+ C_{\theta} \beta_{\theta}\eta_t)+ \beta_{\theta}\eta_t\nonumber\\
&&+
\mathcal{E}_{x}^{\epsilon}(\theta,t)+\mathcal{F}_{x}^{\epsilon}(\theta,t)x_{t-1}+\mathcal{W}_{x}^{\epsilon}(\theta,t)
\end{eqnarray*} 

Denote

\begin{equation*}
\left\lbrace\begin{array}{ll}
\tilde{\Sigma}_{t-1}(\theta)=\prod_{j=0}^{t-1}\tilde{A}_{\theta,t-j} \text{ where } \tilde{A}_{\theta,s}=(I_{n\times n}-K_sC_{\theta})A_{\theta}\\
\tilde{B}_{t}(\theta)=-\epsilon K_{t}\frac{\partial \sigma_{\theta}}{\partial \theta}+K_{t}\sigma_{\theta}\\
\tilde{C}_{t}(\theta)=\epsilon\bigg(\frac{\partial \beta_{\theta}}{\partial \theta}-K_t C_{\theta}\frac{\partial \beta_{\theta}}{\partial \theta}-K_t\beta_{\theta}\frac{\partial C_{\theta}}{\partial \theta}\bigg)+K_tC_{\theta}\beta_{\theta}-\beta_{\theta}\\
\Gamma_l(\theta)=\epsilon\bigg(\frac{\partial \beta_{\theta}}{\partial \theta}-K_t C_{\theta}\frac{\partial \beta_{\theta}}{\partial \theta}-K_t\beta_{\theta}\frac{\partial C_{\theta}}{\partial \theta}\bigg)\\
\tilde{F}_{t}(\theta)=\mathcal{F}^{\epsilon}_x(\theta,t) \text{defined in \eqref{matrixA2} }\\
\tilde{G}_{t}(\theta)=\mathcal{E}^{\epsilon}_x(\theta,t).
\end{array}
\right.
\end{equation*}
with the convention $\prod_{l=0}^{-1}\equiv 1$, we can rewrite

\begin{equation*} 
e_t = \tilde{A}_{\theta,t}e_{t-1}-\tilde{B}_{t}(\theta)\varepsilon_t-\tilde{C}_{t}(\theta)\eta_t+ \tilde{F}_{t}(\theta)x_{t-1}+\tilde{G}_{t}(\theta)\label{ref1}
\end{equation*}
and setting
\begin{equation*}
H_t=-\tilde{B}_{t}(\theta)\varepsilon_t-\tilde{C}_{t}(\theta)\eta_t +\tilde{F}_{t}(\theta)x_{t-1}+\tilde{G}_{t}(\theta)
\end{equation*} 
we get
\begin{equation*} 
e_t = \tilde{\Sigma}_{t-1}(\theta)e_0+\sum_{l=1}^{t} \tilde{\Sigma}_{t-l-1}(\theta)H_l
\end{equation*}

In a similar manner, we can rewrite the hidden state $x_t$ as
\begin{equation} \label{re_x}
x_t = A_{\theta_0}^{t}x_0+\sum_{l=1}^{t}(A_{\theta_0}^{t-l}\beta_{\theta_0}\eta_l+A_{\theta_0}^{t-l}u_l(\theta_0)).
\end{equation}

So,

\begin{eqnarray}
\C(e_t,e_{t-h})&=& \C\bigg(\tilde{\Sigma}_{t-1}(\theta) e_0, \tilde{\Sigma}_{t-h-1}(\theta)  e_0 \bigg)\\
&+& \C\bigg(\sum_{l=1}^{t}\tilde{\Sigma}_{t-l-1}(\theta)\tilde{B}_{l}(\theta)\varepsilon_l, \sum_{l=1}^{t-h}\tilde{\Sigma}_{t-l-h-1}(\theta)\tilde{B}_{l}(\theta)\varepsilon_l \bigg)\label{a}\\
&+& \C\bigg(\sum_{l=1}^{t}\tilde{\Sigma}_{t-l-1}(\theta)\tilde{C}_{l}(\theta)\eta_l, \sum_{l=1}^{t-h}\tilde{\Sigma}_{t-l-h-1}(\theta)\tilde{C}_{l}(\theta)\eta_l \bigg)\label{b}\\
&-& \C\bigg(\sum_{l=1}^{t}\tilde{\Sigma}_{t-l-1}(\theta)\tilde{C}_{l}(\theta)\eta_l, \sum_{l=1}^{t-h}\tilde{\Sigma}_{t-l-h-1}(\theta)\tilde{F}_{l}(\theta)x_{l-1}\bigg)\label{c}\\
&+&\C\bigg(\sum_{l=1}^{t}\tilde{\Sigma}_{t-l-1}(\theta)\tilde{F}_{l}(\theta)x_{l-1}, \sum_{l=1}^{t-h}\tilde{\Sigma}_{t-l-h-1}(\theta)\tilde{F}_{l}(\theta)x_{l-1}\bigg)\label{d}\\
&-&\C\bigg(\sum_{l=1}^{t}\tilde{\Sigma}_{t-l-1}(\theta)\tilde{F}_{l}(\theta)x_{l-1}, \sum_{l=1}^{t-h}\tilde{\Sigma}_{t-l-h-1}(\theta)\tilde{C}_{l}(\theta)\eta_{l}\bigg)\label{e}
\end{eqnarray}

By using the fact that 
$$
X_{l,k}:=\C(x_l,x_{k})=A_{\theta_0}^{l}P_0^- (A_{\theta_0}^{k})^\ast +\sum_{p=1}^{l\wedge k}\bigg\{A_{\theta_0}^{l-p}Q_{\theta_0}\bigg(A_{\theta_0}^{k-p}\bigg)^\ast\bigg\}
$$
(an equation that follows by induction from the definition of $x_n$),
we obtain that Eq.\eqref{a} is equal to
\begin{equation*}
\sum_{l=1}^{t-h}\left\{\bigg(\tilde{\Sigma}_{t-l-1}(\theta)\tilde{B}_{l}(\theta)\bigg)\bigg(\tilde{\Sigma}_{t-l-h-1}(\theta)\tilde{B}_{l}(\theta)\bigg)^\ast\right\},
\end{equation*}
and that Eq.\eqref{b} is equal to

\begin{equation*}
\sum_{l=1}^{t-h}\bigg\{\bigg(\tilde{\Sigma}_{t-l-1}(\theta)\tilde{C}_{l}(\theta)\bigg)
\bigg(\tilde{\Sigma}_{t-l-h-1}(\theta)\tilde{C}_{l}(\theta)\bigg)^\ast\bigg\}.
\end{equation*}

By replacing $x_l$ in \eqref{c} by \eqref{re_x}, we have that Eq.\eqref{c} is equal to

\begin{equation*}
\sum_{l=1}^{t-h}\bigg(\tilde{\Sigma}_{t-l-1}(\theta)\tilde{C}_{l}(\theta)\bigg) \bigg\{\sum_{k=l+1}^{t-h}\bigg(\bigg(\tilde{\Sigma}_{t-k-h-1}(\theta)\tilde{F}_k(\theta)\bigg)A_{\theta_0}^{k-l-1}\beta_{\theta_0}\bigg)^\ast\bigg\}
\end{equation*}

In a similar manner, Eq.\eqref{d} is equal to

\begin{equation*}
\sum_{l=1}^t\sum_{k=1}^{t-h}\tilde\Sigma_{t-l-1}(\theta)\tilde F_l(\theta)X_{l-1,k-1}(\tilde \Sigma_{l-h-k-1}(\theta)\tilde F_k(\theta))^\ast .
\end{equation*}

And, Eq.\eqref{e} is equal to

\begin{equation*}
\sum_{l=1}^{t-h}\bigg(\tilde{\Sigma}_{t-l-h-1}(\theta)\tilde{C}_{l}(\theta)\bigg)\bigg\{\sum_{k=l+1}^{t}\bigg(\tilde{\Sigma}_{t-k-1}(\theta)\tilde{F}_k(\theta)A_{\theta_0}^{k-l-1}\beta_{\theta_0}\bigg)^\ast\bigg\}
\end{equation*}


In a same way, we obtain $\C(e_t,x_{t-h})$ and $ \C(x_t,e_{t-h})$ \qed.\\

\textbf{Acknowledgement:}
  We thank Patricia Reynaud-Bouret and N. Chopin for their suggestions and their interest about this work.

\bibliographystyle{apalike}
\bibliography{PropagationError}

\end{document}